\title{Geodetic Set on Graphs of Constant 
Pathwidth and Feedback Vertex Set Number} 
\titlerunning{Geodetic Set on Graphs of Constant 
Pathwidth and Feedback Vertex Set Number} 
\author{Prafullkumar Tale}{Indian Institute of Science Education and Research Pune, Pune, India \and \url{https://pptale.github.io/}}{prafullkumar@iiserb.ac.in}{https://orcid.org/0000-0001-9753-0523}{}
\authorrunning{P. Tale} 
\keywords{Geodetic Sets, NP-hardness, Constant Treewidth} 
\newcommand{\para}{\textsf{para}}
\newcommand{\calU}{\mathcal{U}}
\newcommand{\calX}{\mathcal{X}}
\newcommand{\mdfull}{\textsc{Metric Dimension}\xspace}
\newcommand{\gsfull}{\textsc{Geodetic Set}\xspace}
\newcommand{\calS}{\mathcal{S}}
\newcommand{\calO}{\mathcal{O}}
\DeclareMathOperator{\dist}{dist}
\newcommand{\fvs}{\mathtt{fvs}}
\newcommand{\td}{\mathtt{td}}
\newcommand{\tw}{\mathtt{tw}}
\newcommand{\pw}{\mathtt{pw}}
\newcommand{\diam}{\mathtt{diam}}
\newcommand{\yes}{\textsc{Yes}}
\newcommand{\ETH}{\textsf{ETH}}
\newcommand{\pndt}{\textsf{pndt}}
\newtheoremstyle{noparentheses}
    {}{}{\itshape}{}%
    {\bfseries}{.}{ }%
    {\thmname{#1}\thmnumber{ #2}\thmnote{ {\mdseries #3}}}
\theoremstyle{noparentheses} 
\newtheorem*{theoremnp*}{Theorem}
\tikzset{
  circ/.style = {circle,draw,fill,inner sep=1.3pt}
}
\newtheorem{reduction rule}{Reduction Rule}
\begin{document}

\maketitle

\begin{abstract}
In the \textsc{Geodetic Set} problem, the input consists of a graph $G$
and a positive integer $k$.
The goal is to determine whether there exists a subset $S$ of vertices
of size $k$ such that every vertex in the graph is included in a
shortest path between two vertices in $S$.
Kellerhals and Koana [IPEC 2020; J. Graph Algorithms Appl 2022] proved that
the problem is $\W[1]$-hard when parameterized
by the pathwidth and the feedback vertex set number of the input graph.
They posed the question of whether the problem admits an
$\XP$ algorithm when parameterized by the combination of these two parameters.
We answer this in negative by proving that the problem remains
\NP-hard on graphs of constant pathwidth
and feedback vertex set number.
\end{abstract}


\section{Introduction}
\label{sec:intro}

Consider a simple undirected graph $G$ with vertex set $V(G)$ and edge set $E(G)$.
For two vertices $u, v \in V(G)$, let $I(u, v)$ be the set of all vertices in $G$ that are part of some shortest path between $u$ and $v$. For a subset $S$ of vertices, we generalize this definition to $I(S)$ by taking the union of $I(u, v)$ for all pairs of vertices $u, v$ in $S$. A subset of vertices $T$ is said to be \emph{covered} by $S$ if $T \subseteq I(S)$. A set of vertices $S$ is a \emph{geodetic set} if $V(G)$ is covered by $S$. In the \textsc{Geodetic Set} problem, the input is a graph $G$ and a positive integer $k$, and the objective is to determine whether there is a geodetic set of size $k$.
This problem was introduced in $1993$ by Harary, Loukakis, and Tsouros~\cite{harary1993}.
We refer readers to \cite{floCALDAM20,ekim2012} for the application
of the problem.
See~\cite{farber1986,bookGC} for a broader
discussion on geodesic convexity in graphs.

The \textsc{Geodetic Set} problem falls under the broad category of \emph{metric graph problems}. These problems are defined using a metric on the graph, with the shortest distance between two vertices being the most commonly used metric. Metric graph problems include many important
$\NP$-complete graph problems such as \textsc{{Distance $d$}-Dominating Set} (also called \textsc{$(k,d)$-Center}), \textsc{{Distance} $d$-Independent Set} (also called \textsc{$d$-Scattered Set}), and \textsc{Metric Dimension}. These problems have been central to the development of classical as well as parameterized algorithms and complexity theory~\cite{KLP19,KLP22,BelmonteFGR17,LM21,DBLP:conf/wg/BergougnouxDI24,DBLP:conf/fct/ChakrabortyFH23,DBLP:conf/icalp/FoucaudGK0IST24,JKST19,DBLP:conf/isaac/ChakrabortyFMT24}, as they behave quite differently from their more `local' (neighborhood-based) counterparts such as  \textsc{Dominating Set} or \textsc{Independent Set}.

Due to the non-local nature of metric graph problems, most algorithmic techniques fail to yield positive results.
Consider the case of the \textsc{Metric Dimension} problem
in which an input is a graph $G$ and an integer $k$,
and the objective is to determine whether there is
a set $S$ of at most $k$ vertices such that for any pair of vertices
$u$ and $v$, there is a vertex in $S$ which has distinct distances
to $u$ and $v$.
In a seminal paper, Hartung and Nichterlein~\cite{HartungN13} proved that the problem is \W[2]-hard when parameterized by the solution size $k$. This motivated the structural parameterization of the problem~\cite[Section 2.1]{DBLP:conf/icalp/FoucaudGK0IST24}.
The complexity of \textsc{Metric Dimension} parameterized by treewidth
remained an open problem for a long time, until Bonnet and Purohit~\cite{BP21}
made the first breakthrough by proving that the problem is \W[1]-hard when
parameterized by pathwidth.
This result was strengthened in two directions: Li and Pilipczuk~\cite{LM21}
proved that it is \para-\NP-hard when parameterized by pathwidth,
whereas Galby et al.~\cite{GKIST23} showed that the problem is
\W[1]-hard when parameterized by pathwidth
plus feedback vertex set number of the graph.

Continuing the `hardness saga', we mention some results that hold
for both \textsc{Metric Dimension} and \textsc{Geodetic Set}.
Foucaud et al.~\cite{DBLP:conf/icalp/FoucaudGK0IST24} demonstrated that
these two problems on graphs of bounded diameter admit
double exponential conditional lower bounds when parameterized by treewidth, along with a matching algorithm.
This was a \emph{first-of-its-kind} result for a natural \NP-complete problem.
The same set of authors also proved that both problems admit `exotic'
conditional lower bounds when parameterized by the vertex cover
number~\cite{DBLP:conf/stacs/FoucaudGK0IST25}.
Bergougnoux et al.~\cite{DBLP:conf/wg/BergougnouxDI24} showed that
enumerating minimal solution sets for both problems in
split graphs is equivalent to enumerating minimal transversals of
hypergraphs, arguably the most important open problem
in algorithmic enumeration.

As illustrated in the examples above, \textsc{Metric Dimension} and
\textsc{Geodetic Set} behave similarly concerning hardness results.
Kellerhals and Koana~\cite{KK22}
proved that \textsc{Geodetic Set} is \W[1]-hard when parameterized by
the pathwidth and feedback vertex set number (and the solution size)
combined, a result similar to that of
Galby et al.~\cite{GKIST23} regarding \textsc{Metric Dimension}.
However, a result similar to that of Li and Pilipczuk's work~\cite{LM21}
on \textsc{Metric Dimension} for \textsc{Geodetic Set} is not known.
More formally, we do not know the complexity of the
\textsc{Geodetic Set} problem on graphs of constant treewidth.
Kellerhals and Koana~\cite{KK22} posed an even stronger question: Does \textsc{Geodetic Set} admit an $\XP$ algorithm when parameterized by the pathwidth
and feedback vertex set number?
We answer this question in the negative.

\begin{theorem} \label{thm:np-hardness}
\textsc{Geodetic Set} is \NP-complete even on graphs of
pathwidth $17$ and feedback vertex set number $13$.
\end{theorem}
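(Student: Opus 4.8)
The plan is to reduce from \PSAT, the variant of $3$-SAT in which the variable set is partitioned into three blocks $X^1, X^2, X^3$ and every clause contains at most one literal from each block; this problem is \NP-hard, and its block structure is exactly what will let me keep the pathwidth small. The central difficulty, and what separates an \NP-hardness proof from the known \W[1]-hardness of Kellerhals and Koana, is that here I may not bound the solution size: I want to encode an arbitrarily large formula into a graph whose feedback vertex set number and pathwidth are absolute constants. This forces a design in which almost all of the graph is a forest of bounded pathwidth (hence a bounded-width collection of caterpillar-like trees), and in which every cycle passes through a fixed constant-size set of \emph{portal} vertices that will double as the feedback vertex set. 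All information of the formula must therefore be carried by the \emph{lengths} of paths and by the combinatorics of this thin forest, not by any cyclic structure.

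The first building block is forcing. A vertex of degree one cannot be an internal vertex of any shortest path, so it lies in every geodetic set; I will attach pendant vertices throughout the construction to create a polynomially large set of \emph{forced} terminals whose pairwise shortest paths must collectively cover every remaining vertex. The second building block encodes truth assignments through distances. Rather than spending one cyclic gadget per variable, I represent an assignment to a block $X^i$ by a number written in a fixed radix and realize that number through the lengths of paths connecting forced terminals to the portal set. The truth value of each coordinate is then witnessed by which geodesics between forced terminals are shortest, and hence by which internal vertices those geodesics sweep into the covered set. A binary or mixed-radix length encoding lets me represent the exponentially many assignments to a block with paths of only polynomial length, so the instance stays polynomial while the portal set remains of constant size.

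With assignments encoded, I add clause-checking gadgets. For each clause I introduce a short sub-structure that is covered by the forced terminals if and only if at least one of its literals is satisfied by the encoded assignment, and auxiliary vertices that must be covered unconditionally are handled by dedicated \emph{nullifier} vertices whose sole purpose is to be swept up by an obligatory geodesic. I will then prove the two directions. From a satisfying assignment one reads off a geodetic set of the prescribed size $k$, namely the forced terminals together with a controlled number of extra vertices. Conversely, any geodetic set of size $k$ must, because the forced terminals are compulsory and the portals are the only route for geodesics, select the assignment coordinates consistently and cover every clause gadget, yielding a satisfying assignment.

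Finally I analyze the parameters. The feedback vertex set number is bounded by exhibiting the constant portal set, together with a few auxiliary hubs, whose deletion leaves a forest; the partition of $X^1, X^2, X^3$ then lets me sweep a path decomposition across the three blocks one coordinate-path at a time, keeping its width constant and, after optimizing the gadgets, giving pathwidth at most $17$ and feedback vertex set number at most $13$. I expect the main obstacle to be the robustness of the distance encoding: I must guarantee that no unintended short path through the shared portals creates a ``shortcut'' covering a clause vertex without the corresponding literal being satisfied, and conversely that the geodesics I rely on really are shortest paths. Pinning down all pairwise distances among the forced terminals exactly, while simultaneously holding both the pathwidth and the feedback vertex set number at small constants, is the delicate and calculation-heavy heart of the reduction.
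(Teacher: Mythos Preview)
Your high-level architecture matches the paper's: force pendant vertices, route every cycle through a constant set of ``portal'' (the paper calls them \emph{common branching}) vertices that double as the feedback vertex set, and encode all combinatorics in path lengths. Where you diverge, however, there is a genuine gap.

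You propose to ``represent an assignment to a block $X^i$ by a number written in a fixed radix'' and assert that binary encoding lets you realise exponentially many assignments with polynomial path lengths. This cannot work as stated. With a constant-size portal set and all distances polynomially bounded, a single solution vertex has only polynomially many distinguishable distance profiles to the portals; one chosen vertex therefore cannot select among $2^{\Theta(n)}$ assignments, and no shortest-path test can extract an arbitrary bit of such a number. To encode an $n$-variable instance you are forced to spend $\Theta(n)$ non-pendant solution vertices, hence $\Theta(n)$ separate choice gadgets---exactly the ``one gadget per variable'' design you say you want to avoid. Once you accept that, the binary/radix idea evaporates and you are back to a per-unit construction.

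The paper sidesteps both this issue and your disjunctive clause test by reducing from \textsc{3-Dimensional Matching} rather than a SAT variant. It builds $n$ identical set-encoding gadgets; the budget forces exactly one non-pendant solution vertex into each, and that vertex selects one of the $m$ candidate sets. The check that element $(\alpha,a)$ is covered is an \emph{equality} test on indices, implemented by simple linear distance offsets of the form $M^{2}\pm aM$ through three dedicated portals $p^{\alpha},q^{\alpha},r^{\alpha}$: the geodesic from the chosen vertex to the element gadget passes through the ``covering'' branch $q^{\alpha}$ iff the indices match. This linear (not binary) encoding, together with an exact-cover rather than a satisfiability source problem, is what keeps the shortest-path bookkeeping tractable and yields the concrete bounds $\pw\le 17$, $\fvs\le 13$.
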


Note that this result is surprising in the context of the closely
related problem \textsc{Geodetic Hull}.
In \textsc{Geodetic Hull}, the input is a graph $G$ and an integer $k$, and the objective is to determine whether there is a vertex set $S$ of size $k$ such that $I|V^j(G)|[S] = V(G)$, where $I^0[S] = S$ and $I^j[S] = I[I^{j-1}[S]]$ for $j > 0$.
It is easy to see that both these problems are trivial on trees as it is
necessary and sufficient to take all the leaves into the solution.
Kante et al.~\cite{DBLP:journals/tcs/KanteMS19} proved, among other things, that the problem admits
an \XP-algorithm parameterized by treewidth.
Hence, while \textsc{Geodetic Hull} is polynomial-time solvable on
graphs of constant treewidth, \textsc{Geodetic Set} remains \NP-hard.

\subparagraph*{Related Work}
As is often the case with metric-based problems, \gsfull is
computationally hard, even for very structured graphs. 
See~\cite{atici2002,bueno2018,floISAAC20,floCALDAM20,DBLP:journals/tcs/ChakrabortyGR23,CHZ02,dourado2008,dourado2010,JCMCC96,ekim2012,harary1993}
for various earlier hardness results.
\gsfull can be solved in polynomial time
on split graphs~\cite{dourado2010,JCMCC96}, 
well-partitioned chordal graphs~\cite{wellpart}, outerplanar
graphs~\cite{mezzini2018}, ptolemaic graphs~\cite{farber1986},
cographs~\cite{dourado2010}, distance-hereditary
graphs~\cite{dh}, block-cactus graphs~\cite{ekim2012}, solid grid graphs~\cite{floISAAC20,DBLP:journals/tcs/ChakrabortyGR23}, and proper
interval graphs~\cite{ekim2012}.
A two-player game variant of \gsfull was introduced in~\cite{BH85}.

As mentioned before, Kellerhals and Koana in~\cite{KK22}
studied the parameterized complexity of \gsfull.
They observed that the reduction from~\cite{dourado2010} implies that
the problem is \W[2]-hard when parameterized by the solution size, even for chordal
bipartite graphs.
They proved the problem to be \W[1]-hard for the parameters solution size,
feedback vertex set number, and pathwidth, combined~\cite{KK22}.
On the positive side, they showed that \gsfull is \FPT\ for the parameters treedepth,
modular-width (more generally, clique-width plus diameter), and
feedback edge set number~\cite{KK22}.
Chakraborty et al.~\cite{floISAAC20} proved that the problem is \FPT\ on
chordal graphs when parameterized by the treewidth.
On the approximation algorithms side,
it is known that  the minimization variant of the problem
is \NP-hard to approximate within a factor of $o(\log n)$, even
for diameter~2 graphs~\cite{floCALDAM20} and
subcubic bipartite graphs of arbitrarily large girth~\cite{DIT21}.

\section{Preliminaries}
\label{sec:preliminaries}
For an integer $n$, we let $[n] = \{1,\ldots,n\}$.
We use standard graph-theoretic notation and refer the reader
to~\cite{D12} for any undefined notation.
For an undirected graph $G$, the sets $V(G)$ and $E(G)$ denote its
set of vertices and edges, respectively.
Two vertices $u,v\in V(G)$ are {\it adjacent} or {\it neighbors} if
$(u, v)\in E(G)$.
We say a vertex $v$ is a \emph{pendant vertex} if its degree is one.
A vertex $v$ is a \emph{branching vertex} if its degree is at least three.
A path is a collection of vertices $\{v_1, v_2, \dots, v_{\ell}\}$
such that $(v_i, v_{i+1})$ for every $i$ in $[\ell - 1]$.
In this case, $v_1$ and $v_{\ell}$ are two endpoints, whereas all the other vertices are called \emph{internal vertices}.
We say a path is simple if all its internal vertices have degree
precisely two.
The length of a path is the number of edges in it.
The {\it distance} between two vertices $u, v$ in $G$, denoted by
$\dist(u,v)$, is the length of a shortest path connecting
$u$ to $v$.
For a subset $S$ of $V(G)$, we denote the graph obtained by deleting $S$ from $G$ by $G - S$.
Recall that a subset $S \subseteq V(G)$ is a \emph{geodetic set} if for
every $u \in V(G)$, the following holds: there exist $s_1,s_2 \in S$
such that $u$ lies on a shortest path from $s_1$ to $s_2$.
Consider a pendant vertex $v$ in $G$.
Note that $v$ does not belong to any shortest path between
any pair $x,y$ of vertices which are distinct from $v$.
Hence, $v$ belongs to every geodetic set of $G$.
This was also observed in \cite{DBLP:journals/networks/ChartrandHZ02}.

\subparagraph*{Parameterized Complexity.}
{In} parameterized analysis, we associate each instance
$I$ with a parameter $\ell$, and are interested in an algorithm
with running time $f(\ell) \cdot |I|^{\calO(1)}$
for some computable function $f$.
Parameterized problems that admit such an algorithm are called
\emph{fixed parameter tractable} (\FPT) parameterized by $\ell$.
On the other hand, $\W[1]$-hardness categorizes problems that
are unlikely to have \FPT\ algorithms.
A parameterized problem is in \XP~if it admits an algorithm running in time
$|I|^{f(\ell)}$ for some computable function $f$.
We say a parameterized problem in \para-\NP-hard if it is
\NP-hard even when the parameter is a constant.
For more on parameterized complexity and related terminologies, we refer the reader to the recent book by Cygan et al.~\cite{DBLP:books/sp/CyganFKLMPPS15}.

We define some of the parameters mentioned in this article.
For a graph $G$, a set $X \subseteq V(G)$ is a
\emph{feedback vertex set} of $G$ if $G -  X$ is an acyclic graph.
The \emph{feedback vertex set number} of graph $G$, denoted by
$\fvs(G)$, is the size of minimum feedback vertex set of $G$.
A tree decomposition of a graph $G$ is a pair $(\calX , T)$ where
$T$ is a tree and $\calX = \{X_i \mid i \in V (T)\}$ is a collection of subsets of
$V(G)$ such that:
$(i)$ $\bigcup_{i \in V(T)} X_i = V(G)$,
$(ii)$ for each edge $(x, y)$ in $E(G)$, $\{x, y\} \subseteq X_i$
for some $i \in V(T)$, and
$(iii)$ for each $x$ in $V (G)$ the set $\{i \mid x \in  X_i\}$ induces a
connected subtree of $T$.
The width of the tree decomposition is $\textsf{max}_{i\in V (T)}\{|X_i| - 1\}$.
The treewidth of a graph $G$ is the minimum width over all tree
decompositions of $G$.
We denote by $\tw(G)$ the treewidth of graph $G$.
If, in the definition of treewidth, we restrict the tree $T$ to be a path, then we get the notion of pathwidth and denote it by $\pw(G)$.
It is easy to verify that $\tw(G) \le \fvs(G) + 1$ and $\tw(G) \le \pw(G)$
whereas $\pw(G)$ and $\fvs(G)$ are incomparable.

We define an equivalent definition of pathwidth via mixed search games.
In a mixed search game, a graph $G$ is considered a system of tunnels.
Initially, all edges are contaminated by a gas.
An edge is cleared by placing searchers at both its endpoints simultaneously
or by sliding a searcher along the edge.
A cleared edge is re-contaminated if there is a path from
an uncleared edge to the cleared edge without any searchers on its vertices or edges.
A search is a sequence of operations that can be of the following types:
$(i)$ placement of a new searcher on a vertex;
$(ii)$ removal of a searcher from a vertex;
$(iii)$ sliding a searcher on a vertex along an incident edge and
placing the searcher on the other end.
A search strategy is winning if, after its termination, all edges are
cleared.
The mixed search number of a graph $G$, denoted by $\texttt{ms}(G)$, is
the minimum number of searchers required for a winning strategy of mixed searching on $G$.
Takahashi et al.~\cite{DBLP:journals/tcs/TakahashiUK95}
proved that $\pw(G) \le \texttt{ms}(G) \le \pw(G) + 1$,
which we use to bound the pathwidth of the graphs obtained in reduction.
\section{NP-Hardness}
\label{sec:np-hardness}

In this section, we prove that the problem is \para-\NP-hard when parameterized
by pathwidth and feedback vertex set number of the input graph.
We present a polynomial-time reduction from the
\textsc{3-Dimensional Matching} problem, which is known to be
\NP-hard~\cite[SP~1]{DBLP:books/fm/GareyJ79}.
For notational convenience, we work with the following definition of the problem.
Input consists of a universe
$\calU = \{\alpha, \beta, \gamma\} \times [n]$,
a family $\calS$ of $m$ subsets of $\calU$ such that
for every set $S \in \calS$, $S = \{(\alpha, a), (\beta, b), (\gamma, c)\}$
for some $a, b, c \in [n]$.
The goal is to find a subset $\calS^{\prime} \subseteq \calS$ of size $n$
that  partitions $\calU$.

\paragraph*{Reduction}
The reduction takes as input an instance $(\calU, \calS)$ of
\textsc{3-Dimensional Matching} and returns an instance
$(G, k)$ of \textsc{Geodetic Set} in polynomial time
where pathwidth plus feedback vertex set number of $G$ is a constant.
Recall that
a {branching vertex} is a vertex of
degree at least three.
We start by specifying the common
branching vertices in $G$ and the lengths of simple paths
connecting them.

\begin{figure}[t]
\centering
\includegraphics[scale=0.5]{./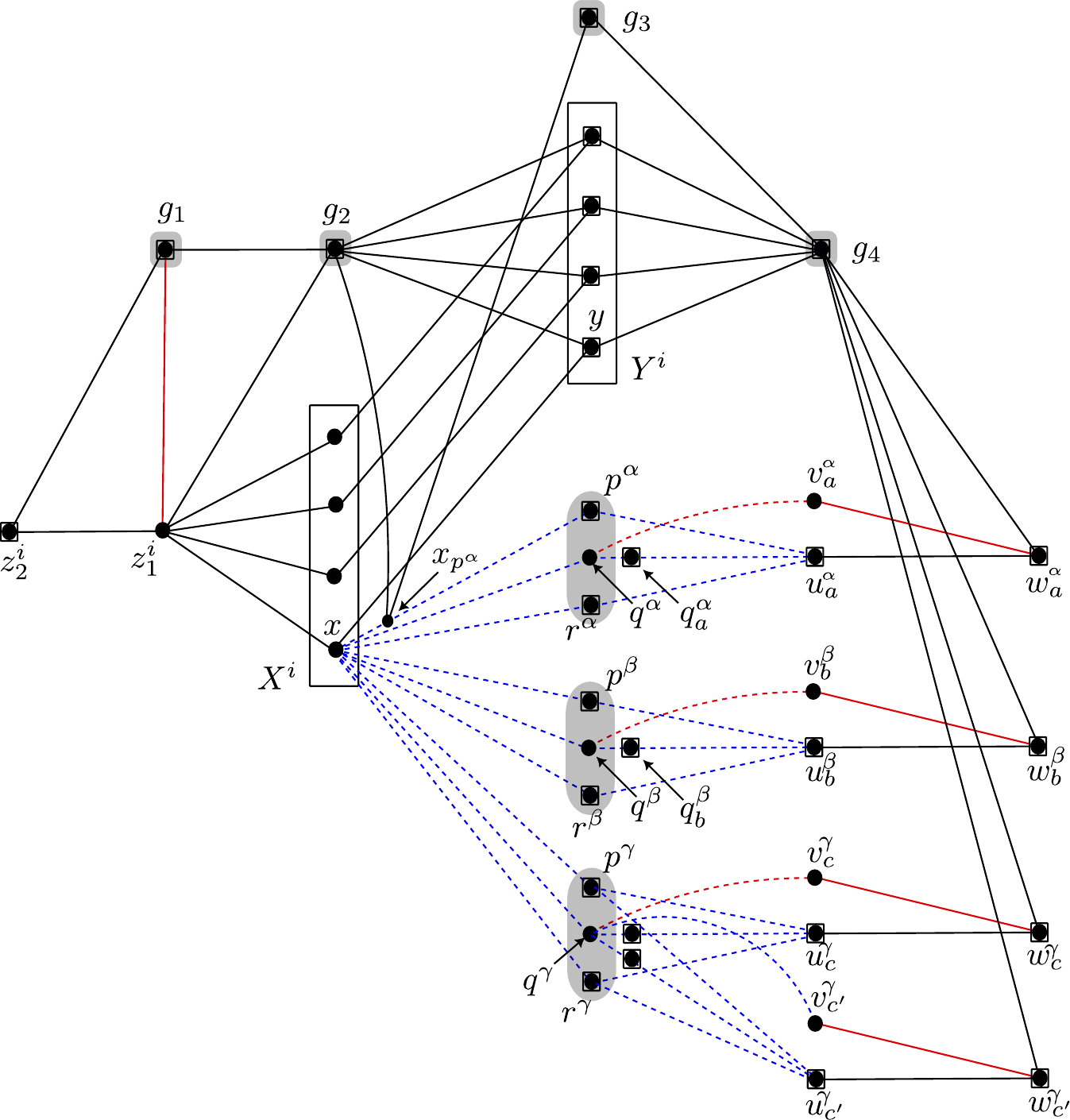}
\caption{The figure shows the $i^{th}$ copy of set-encoding gadget
$\{z^i_2, z^i_1\} \cup X^i \cup Y^i$,
element-encoding gadgets encoding elements of the form
$(\alpha, a)$, $(\beta, b)$,
$(\gamma, c)$ and $(\gamma, c')$, for some $a, b, c, c' \in [n]$,
and the common branching vertices.
Each solid (black or red) edge represents a path of fixed length
which is either $M^2$ or $M^2 - 1$.
Each dotted (blue or red) edge represents a path whose length depends
on the set or element corresponding to one of its endpoints.
Apart from the internal vertices in the red edges,
all the other vertices are covered by pendant vertices in the graph.
For clarity, we show connecting paths
starting at only one vertex in $X^i$.
Moreover, it only shows one of the nine vertices, viz $x_{p^{\alpha}}$,
that are adjacent with $x$ and are connected to $g_2$ and $g_3$.}
\label{fig:np-hardness}
\end{figure}

\medskip
\noindent \textbf{Common Branching Vertices.}
The reduction starts constructing $G$ by adding the following
vertices: $\{g_1, g_2, g_3, g_4\}$,
$\{p^{\alpha}, q^{\alpha}, r^{\alpha} \}$,
$\{p^{\beta}, q^{\beta}, r^{\beta} \}$, and
$\{p^{\gamma}, q^{\gamma}, r^{\gamma}\}$.
All set-encoding gadgets and element-encoding gadgets,
which we define below, are connected via these common branching vertices.
These common branching vertices are denoted using grey shaded
region in all the figures.
Apart from the vertices in $\{q^{\alpha}, q^{\beta}, q^{\gamma}\}$,
the reduction adds pendant vertices adjacent to each common
branching vertex.
In all the figures, vertices that are adjacent to a pendant vertex
are highlighted by enclosed squares around them.
Define an integer $M$ such that $M \in \calO(n^2)$ and
for any $a \in [n]$,
we have $0.99 \cdot M^2 \le M^2 - 2aM, M^2 + 2aM \le 1.01 \cdot M^2$.
The reduction connects $g_1$ to $g_2$ and $g_3$ to $g_4$
using a path of length $M^2$ for each connection.
See Figure~\ref{fig:np-hardness}.

\medskip

\noindent \textbf{Set-encoding Gadget.}
The reduction adds $n$ identical set-encoding gadgets denoted by $\calS^1, \calS^2, \dots, \calS^n$.
For any $i \in [n]$, set-encoding gadget $\calS^i$ contains
set of vertices $X^i$, $Y^i$, and two auxiliary vertices $z^i_1$ and $z^i_2$.
Sets $X^i$ and $Y^i$ contains $m$ vertices each.
The reduction adds matching edges across $X^i$ and $Y^i$.
Each of these matching edges and their endpoints corresponds to
a unique set in $\calS$.
Then, the reduction replaces each of these matching edges
with a simple path of length $M^2$.
For each vertex in $Y^i$, the reduction adds a pendant vertex
adjacent to it.
It connects $z^i_1$ with every vertex in $\{z^i_2\} \cup X^i$
using a simple path of length $M^2$ for each connection.


We now specify how the vertices mentioned above are
connected to common branching vertices.
The reduction connects
$(i)$ $g_1$ to $z^i_1$ and $z^i_2$,
$(ii)$ $g_2$ to $z^i_1$, and
$(iii)$ $g_4$ to every vertex in $Y^i$,
using paths of length $M^2$ for each connection.
It connects $g_2$ to every vertex in $Y^i$ using a path of length
$M^2 - 1$.
We remark that these are the only paths that are of length
$M^2 - 1$.
To specify the connection to the remaining common branching vertices,
consider set $S = \{(a', \alpha), (b', \beta), (c', \gamma)\}$ in $\calS$
and the vertex $x$ corresponding to it in $X^i$.
The reduction connects $x$  to all the remaining
common branching vertices  such that
\begin{itemize}[nolistsep]
\item $\dist(x, p^{\alpha}) = M^2 + 2a'M$,
$\dist(x, q^{\alpha}) = M^2 - a'M$, and
$\dist(x, r^{\alpha}) = M^2 - 2a'M$;
\item $\dist(x, p^{\beta}) = M^2 + 2b'M$,
$\dist(x, q^{\beta}) = M^2 - b'M$, and
$\dist(x, r^{\beta}) = M^2 - 2b'M$; and
\item $\dist(x, p^{\gamma}) = M^2 + 2c'M$,
$\dist(x, q^{\gamma}) = M^2 - c'M$, and
$\dist(x, r^{\gamma}) = M^2 - 2c'M$.
\end{itemize}
See Figure~\ref{fig:np-hardness-distance} for an illustration.

Now, consider the vertex on the path connecting $x$ to $p^{\alpha}$
which is adjacent to $x$.
We denote this vertex by $x_{p^{\alpha}}$.
The reduction connects $x_{p^{\alpha}}$ with $g_2$ and $g_3$ using
the paths of length $M^2$ each.
It repeats this process to add vertices
$x_{q^{\alpha}}$, $x_{r^{\alpha}}$,
$x_{p^{\beta}}$, $x_{q^{\beta}}$, $x_{r^{\beta}}$,
$x_{p^{\gamma}}$, $x_{q^{\gamma}}$, and $x_{r^{\gamma}}$,
and their connection to $g_2$ and $g_3$.

\begin{figure}[t]
\centering
\includegraphics[scale=0.75]{./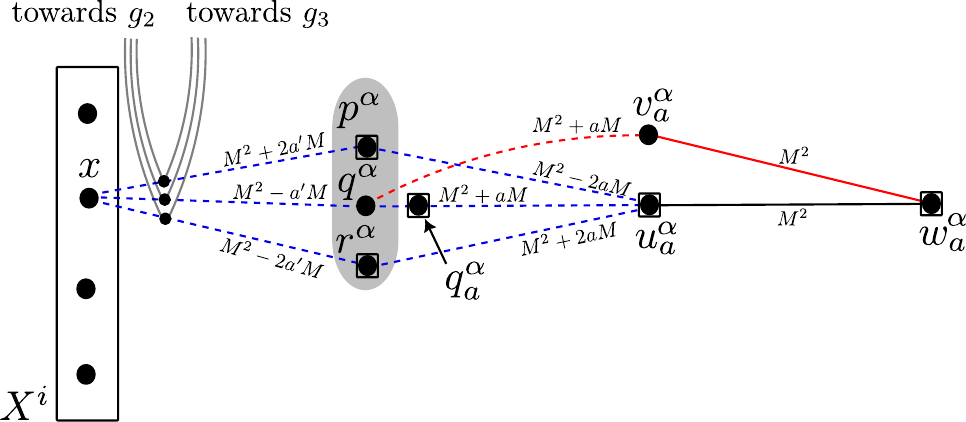}
\caption{On the left of the grey (shaded) area:
Connection between some common branching vertices and a vertex $x$
in $X^i$
which corresponds to set $S = \{(\alpha, a'), (\beta, b'), (\gamma, c')\}$,
for some $a', b', c' \in [n]$.
The figure also shows vertices $x_{p^{\alpha}}$,
$x_{q^{\alpha}}$, and $x_{r^{\alpha}}$.
On the right of the grey (shaded) area: Connection between some common
vertices and element-encoding gadget encoding $(\alpha, a)$
for some $a \in [n]$.}
\label{fig:np-hardness-distance}
\end{figure}

\medskip

\noindent \textbf{Element Encoding Gadget.}
Consider an element $(\alpha, a)$ in $\calU$.
The reduction adds three vertices $u^{\alpha}_a, v^{\alpha}_a$
and  $w^{\alpha}_a$.
It adds a pendant vertex adjacent to $u^{\alpha}_a$ and another
pendant vertex adjacent to $w^{\alpha}_a$.
It connects $w^{\alpha}_a$ with $u^{\alpha}_a$ and $v^{\alpha}_a$
using a simple path of length $M^2$ for
each connection.

The reduction connects 
$u^{\alpha}_a$ with $\{p^{\alpha}, q^{\alpha}, r^{\alpha}\}$
using simple paths such that
\begin{itemize}[nolistsep]
\item $\dist(u^{\alpha}_a, p^{\alpha}) = M^2 - 2aM$,
$\dist(u^{\alpha}_a, q^{\alpha})= M^2 +aM$, and
$\dist(u^{\alpha}_a, r^{\alpha})= M^2 + 2aM$.
\end{itemize}
It connects $v^{\alpha}_a$ (only) to $q^{\alpha}$ such that
$\dist(v^{\alpha}_a, q^{\alpha}) = M^2 + aM$.
See Figure~\ref{fig:np-hardness-distance} for an illustration.
Consider the vertex in the simple path connecting
$q^{\alpha}$ and $u^{\alpha}_a$ which is adjacent with $q^{\alpha}$.
We denote the vertex by $q^{\alpha}_a$.
The reduction adds a pendant vertex and makes
it adjacent with $q^{\alpha}_a$.

The reduction repeats the constructions for elements in
$(\beta, b)$ and $(\gamma, c)$ in $\calU$.
In the first case, the vertices in the gadgets are connected to
$\{p^{\beta}, q^{\beta}, r^{\beta}\}$ whereas in the latter case
they are connected to $\{p^{\gamma}, q^{\gamma}, r^{\gamma}\}$.

Finally, for any vertices of the type $w^{\alpha}_a$, $w^{\beta}_b$, and
$w^{\gamma}_c$ added in the above steps, the reduction connects each of them to $g_4$ using simple paths of length $M^2$ for each connection.
See Figure~\ref{fig:np-hardness}.

\medskip

This completes the construction.
Suppose $P$ is the collection of all the pendant vertices
in $G$.
The reduction returns $(G, k = n + |P|)$ as
an instance of \textsc{Geodetic Set}.

\medskip

Note that each edge in Figure~\ref{fig:np-hardness}
corresponds to a simple path whose length is at least
$0.99 \cdot M^2$ and at most $1.01 \cdot M^2$.
Hence, the distance between any two vertices in
Figure~\ref{fig:np-hardness}, where we use an edge
to denote a simple path connecting two of its endpoints, is proportional to their distance in $G$.

Define $P$ to be the collection of all the pendants
vertices in $G$.
To prove the correctness of the reduction,
we first argue that vertices in $P$
covers most of the vertices in $G$.
We argue that the only vertices they do not cover
are internal vertices in the red edges of Figure~\ref{fig:np-hardness}.
We define these uncovered vertices as follows.
Define $V^{g_1}$ as the collection of all the internal
vertices in the simple path connecting $g_1$ to $z^i_1$ for any $i \in [n]$.
For every $a\in [n]$, define $V^{\alpha}_a$ as the collection
of internal vertices in the simple path connecting $q^{\alpha}$ to $u^{\alpha}_a$
and simple path connecting $u^{\alpha}_a$ to $w^{\alpha}_a$.
Define $V^{\alpha} = \bigcup_{a \in [n]} V^{\alpha}_a$.
Similarly, define $V^{\beta}$ and $V^{\gamma}$.

Recall that $I(u, v)$ denotes the set of vertices in $G$ that are part of some shortest path between $u$ and $v$.
Also, we generalize this definition to set $S$ by
taking the union of $I(u, v)$ for all pairs of vertices $u, v$ in $S$.

\begin{lemma}
\label{lemma:pendant-cover-almost-everything}
Let $P$ be the collection of pendant vertices in $G$.
Then, $V(G) \setminus (V^{g_1} \cup V^{\alpha} \cup V^{\beta} \cup V^{\gamma}) = I(P)$.
\end{lemma}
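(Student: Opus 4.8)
The plan is to establish the two set inclusions separately, after reducing the statement to a question about shortest paths between the \emph{anchors} of the pendant vertices, where the anchor of a pendant is its unique neighbour. First I would record two structural reductions. Since a pendant $\hat{p}$ has a single neighbour $v$, every shortest path out of $\hat{p}$ starts with the edge $\hat{p}v$; hence for pendants $\hat{p},\hat{p}'$ with distinct anchors $v\neq v'$ one has $I(\hat{p},\hat{p}')=\{\hat{p},\hat{p}'\}\cup I(v,v')$, so $I(P)$ equals $P$ together with the set of all anchors and $\bigcup I(v,v')$ over anchor pairs; in particular every anchor lies in $I(P)$. Second, because every simple path of the construction has all internal vertices of degree exactly two, a shortest path between two vertices that are not internal to a simple path traverses each such path either in full or not at all; consequently the internal vertices of any fixed simple path are \emph{either all in $I(P)$ or none of them are}. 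Thus the lemma reduces to deciding, for each simple path (each edge of the schematic graph of Figure~\ref{fig:np-hardness}), whether some shortest path between two anchors runs through it, plus a direct check of the finitely many non-anchor branching vertices ($z^i_1,z^i_2$, the vertices of $X^i$, the $q$-vertices, the $v$-vertices, and the $x_\bullet$-vertices).

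For the inclusion $V(G)\setminus(V^{g_1}\cup V^{\alpha}\cup V^{\beta}\cup V^{\gamma})\subseteq I(P)$ I would exhibit, for each non-red simple path, a pair of anchors whose shortest path traverses it. The governing principle is the proportionality already noted in the text: since each simple path has length in $[0.99M^2,1.01M^2]$, a route using fewer constituent paths (hops) is strictly shorter than one using more, because over the small hop counts occurring here (at most four) an extra hop contributes at least $0.99M^2$ while the offsets $\pm aM,\pm 2aM$ perturb a fixed-hop length by only $O(nM)=o(M^2)$. With this, most paths are covered by an obvious anchor pair (for instance the paths $g_4$--$Y^i$ and $X^i$--$Y^i$ by the pendants on $Y^i$ and on $g_4$, and $u^{\alpha}_a$--$p^{\alpha}$, $u^{\alpha}_a$--$r^{\alpha}$ by the pendants on $u^{\alpha}_a,p^{\alpha},r^{\alpha}$), and the non-anchor branching vertices appear as interior vertices of these same shortest paths. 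The only delicate cases are ties between two equal-hop routes; these are resolved by the offsets and by the single reduced length $M^2-1$ on the $g_2$--$Y^i$ paths, and I would verify each such tie individually.

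The reverse inclusion --- that the internal vertices of the red paths ($g_1$--$z^i_1$, the outer part of $q^{\alpha}$--$u^{\alpha}_a$, and $u^{\alpha}_a$--$w^{\alpha}_a$, together with their $\beta,\gamma$ analogues) lie \emph{outside} $I(P)$ --- is where I expect the real work. For each red path $e=st$ I must rule out every anchor pair $(A,B)$ with $A$ on the far side of $e$ and $B$ on the near side, showing that no shortest $A$--$B$ path uses $e$. For $g_1$--$z^i_1$ the mechanism is a cheaper back door: everything reachable past $z^i_1$ (the set $X^i$, hence $Y^i$ and the $x_\bullet$-vertices) connects to $g_2$ or $g_3$ by a route that makes the detour through $z^i_1$ non-optimal, the length-$(M^2-1)$ edges $g_2$--$Y^i$ being decisive. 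For the element-gadget paths the obstruction is the quantitative core of the reduction: from a set-vertex $x$ of $\alpha$-coordinate $a'$ the three routes to $u^{\alpha}_a$ through $p^{\alpha},q^{\alpha},r^{\alpha}$ have lengths $2M^2+2(a'-a)M$, $2M^2+(a-a')M$, and $2M^2+2(a-a')M$, so whenever $a\neq a'$ the $q^{\alpha}$-route (the one through the red edge) is strictly beaten by the $p^{\alpha}$- or $r^{\alpha}$-route.

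The main obstacle is precisely this exhaustive case analysis over all anchor pairs using the exact path lengths, together with the accompanying verification that the boundary pendants placed on $u^{\alpha}_a$, $w^{\alpha}_a$ and $q^{\alpha}_a$ \emph{confine} the uncovered region to the red interiors without themselves putting any interior red vertex on a shortest path between two anchors. This bookkeeping, rather than any single clever idea, is the crux; once it is carried out, the two inclusions combine to give $V(G)\setminus(V^{g_1}\cup V^{\alpha}\cup V^{\beta}\cup V^{\gamma})=I(P)$.
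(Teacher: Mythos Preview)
Your overall plan---two inclusions, reduce pendants to their anchors, use the ``all or none'' property of schematic edges, then run a case analysis edge by edge---is exactly what the paper does; the paper simply carries out the same enumeration without naming the reduction explicitly. Your forward-inclusion sketch and your treatment of the $g_1$--$z^i_1$ paths in the reverse inclusion are on target.

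The gap is in your reverse-inclusion argument for the element-gadget red paths. You locate the obstruction in the $\pm aM,\pm 2aM$ offsets and compute the three $x$-to-$u^{\alpha}_a$ route lengths through $p^{\alpha},q^{\alpha},r^{\alpha}$. But set-vertices $x\in X^i$ carry no pendant, so they are not anchors; comparing routes out of $x$ does not rule out any anchor pair. Those offset calculations are the engine of the \emph{later} safety lemmas, not of this one. The mechanism the paper actually uses here is pure hop-counting: the only anchor on the far side of the red segment through $v^{\alpha}_a$ is $w^{\alpha}_a$, and any path from $\pndt(w^{\alpha}_a)$ that exits through $q^{\alpha}$ toward the rest of the graph must continue via some $x_{q^{\alpha}}$ to $g_2$, costing roughly $4M^2$, whereas the back door $w^{\alpha}_a\to g_4\to g_3\to g_2$ costs $3M^2$; hence no shortest anchor-pair path passes through $q^{\alpha}$ from that side, and the offsets are irrelevant at this stage. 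Note also that the red paths run through $v^{\alpha}_a$, not $u^{\alpha}_a$ (the paper's definition paragraph contains a typo, corrected implicitly in its proof): the paths $q^{\alpha}_a$--$u^{\alpha}_a$ and $u^{\alpha}_a$--$w^{\alpha}_a$ have pendants at both ends and are covered outright, so your stated targets ``the outer part of $q^{\alpha}$--$u^{\alpha}_a$'' and ``$u^{\alpha}_a$--$w^{\alpha}_a$'' are not the sets you need to exclude.
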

\begin{proof}
We first prove that $V(G) \setminus (V^{g_1} \cup V^{\alpha} \cup V^{\beta} \cup V^{\gamma}) \subseteq I(P)$.
By the construction, any vertex $u$ in $G$ is adjacent
with at most one pendant vertex.
We use $\pndt(u)$ to denote the pendant vertex adjacent
with $u$, if it exists.
We first consider the partition of
vertices in $G$ based on simple paths they are part of.

\emph{Consider the simple paths ending at $g_1$}.
From the construction, $I(\pndt(g_1), \pndt(g_2))$
contains all the vertices in the simple path connecting
$g_1$ and $g_2$.
Similarly, $I(\pndt(z^i_2), \pndt(g_1))$
covers all the vertices in the simple path connecting
$z^i_2$ and $g_1$ for every $i \in [n]$.
Note that the statement of the lemma excludes the vertices
in simple paths connecting $g_1$ and $z^i_1$.

\emph{Consider the simple paths ending at $g_2$}.
From the construction, $I(\pndt(z^i_2), \pndt(g_2))$
covers all the vertices in the simple path connecting
$z^i_1$ to $g_2$ for every $i \in [n]$.
Consider vertices $x$ in $X^i$ and $y$ in $Y^i$
which are the part of $i^{th}$ set-encoding gadget
$\calS^i$ for some $i$ in $[n]$.
$I(\pndt(y), \pndt(g_2))$
covers all the vertices in the simple path connecting
$y$ to $g_2$.
Also, $I(\pndt(g_2), \pndt(g_3))$
covers all the vertices in the path connecting $g_2$ with $x_{p^{\alpha}}$.
A similar statement is true for the other branching vertices adjacent to $x$.

\emph{Consider the simple paths ending at $g_3$}.
From the construction, $I(\pndt(g_3), \pndt(g_4))$
contains all the vertices in the simple path connecting
$g_3$ and $g_4$.
As argued in the previous paragraph,
$I(\pndt(g_2), \pndt(g_3))$
covers all the vertices in the path connecting $g_3$ with
all the branching vertices like $x_{p^\alpha}$ that are
adjacent with $x$.

\emph{Consider the simple paths ending at $g_4$}.
For all such paths, its other endpoints are adjacent to some pendant 
vertex.
By the construction, the shortest path between the pendant
vertices adjacent to the endpoints covers all the vertices
in the simple path.

\emph{Consider the simple paths ending at $p^{\alpha}$,
$p^{\beta}$ or $p^{\gamma}$}.
All the simple paths ending at $p^{\alpha}$ can be partitioned
into two parts: ones that have the other endpoint in
a set-encoding gadget (i.e. $x_{p^{\alpha}}$ for some $x$
in $X^i$ for some $i \in [n]$)
or ones that have the other endpoint in
an element-encoding gadget (i.e. $u^{\alpha}_a$ for some
$a \in [n]$).
In the second case, both the endpoints
of the simple path are adjacent to pendant vertices.
Hence, by the construction, the shortest path between the pendant
vertices adjacent to the endpoints cover all the vertices
in the simple path.
In the first case, suppose the other endpoint of the simple path
is $x_{p^{\alpha}}$ for some $x$ in $X^i$.
Consider vertex $x$ and the unique vertex $y$ in $Y^i$ which is
at the distant $M^2$ from $x$.
By the construction, there is a shortest path from
$p^{\alpha}$ to $y$ that contains $x$.
Hence, $I(\pndt(p^{\alpha}), \pndt(y))$
covers all the vertices in the simple path
connecting $p^{\alpha}$ to $x$.
Since $x$ is an arbitrary point in $X^i$,
the statement is true for all the vertices in the first type of simple path.
These vertices are also covered by $I(\pndt(p^{\alpha}), \pndt(g_3))$.
A similar set of arguments implies that all the vertices in the simple path 
ending at $p^{\beta}$ and $p^{\gamma}$ are covered by vertices in $P$.

\emph{Consider the simple paths ending at $q^{\alpha}$,
$q^{\beta}$ or $q^{\gamma}$}.
Once again, all the simple paths ending at $q^{\alpha}$ can be partitioned
into two parts: ones that have the other endpoint in
a set-encoding gadget
(i.e. $x_{q^{\alpha}}$ for some $x$
in $X^i$ for some $i \in [n]$)
or ones that have the other endpoint in
an element-encoding gadget (i.e. $u^{\alpha}_a$ or $v^{\alpha}_a$ for some
$a \in [n]$).
Fix an integer $a \in [n]$ and recall that
$q^{\alpha}_a$ is adajcent with $q^{\alpha}$.
Using the same arguments as in the above paragraph,
$I(\pndt(q^{\alpha}_a), \pndt(y))$ covers all the vertices
in the simple path connecting
$q^{\alpha}$ and $x$.
This proves the claim for the vertices in the first type of path.
$I(\pndt(q^{\alpha}_ a), \pndt(u^{\alpha}_a))$ covers all the vertices
in the path connecting $q^{\alpha}_ a$ and $u_ a$.
Note that the vertices in the simple path connecting
$q^{\alpha}_ a$ and $v_ a$ are in $V^{\alpha}$,
which are excluded in the statement of the lemma.
A similar set of arguments implies that all the vertices in the simple path ending
at $q^{\beta}$ and $q^{\gamma}$ are covered by vertices in $P$.

\emph{Consider the simple paths ending at $r^{\alpha}$,
$r^{\beta}$ or $r^{\gamma}$}.
A similar set of arguments regarding paths ending at
$p^{\alpha}$, $p^{\beta}$ or $p^{\gamma}$ proves that
all the vertices in simple paths ending at $r^{\alpha}$,
$r^{\beta}$ or $r^{\gamma}$ are covered by vertices in $P$.

\emph{Consider the simple paths whose endpoints
are contained
in a set-encoding gadget}.
Consider a set-encoding gadget $\calS^i$ for a fixed $i$ in $[n]$.
The shortest paths connecting $z^i_2$ and $g_3$
covers all vertices in simple paths connecting $z^i_2$ to 
$z^i_1$ and $z^i_1$ to $x$ for every $x$ in $X^i$.
As argued before, there is a shortest path from
$p^{\alpha}$ to $y$, for any $y$ in $Y^i$, that contains the 
corresponding $x$ in $X^i$.
Hence, $I(\pndt(p^{\alpha}), \pndt(y))$
covers all the vertices in the simple path
connecting $x$ to $y$.
This implies that all the vertices in $\calS^i$,
except the vertices the vertices in $V^{g_1}$ are covered.
Recall that $V^{g_1}$ contains the internal vertices of the path
connecting $g_1$ with $z^i_1$.

\emph{Consider the simple paths whose both endpoints
are contained
in an element-encoding gadget}.
Consider an element-encoding gadget encoding
$(\alpha, a)$ for some $a$ in $[n]$.
By the construction, $I(\pndt(u^{\alpha}_a), \pndt(w^{\alpha}_a))$
covers all the vertices in the simple
path connecting $u^{\alpha}_a$ and $w^{\alpha}_a$.
We remark that the vertices in the simple path connecting
$v^{\alpha}_a$ and $w^{\alpha}_a$ are part of $V^{\alpha}$ and hence
excluded from the statement of the lemma.
Using similar arguments for elements of the form
$(\beta, b)$ and $(\gamma, c)$ for some $b, c$ in $[n]$.

The arguments above imply that
$V(G) \setminus (V^{g_1} \cup V^{\alpha} \cup V^{\beta} \cup V^{\gamma}) \subseteq I(P)$.
It remains to prove that no vertex in
$(V^{g_1} \cup V^{\alpha} \cup V^{\beta} \cup V^{\gamma})$
is covered by the vertices in $P$.

Consider the vertices in $V^{g_1}$ which
are in the $i^{th}$ copy of set-encoding gadget for
some $i$ in $[n]$.
It is easy to see that vertices in $P \setminus \{\pndt(z^i_2), \pndt(g_1)\}$
can not cover the vertices mentioned in the previous sentence.
Any shortest path whose one endpoint is $\pndt(z^i_2)$
and other endpoint is in $P \setminus \{\pndt(g_1)\}$ contains either
$\{g_1, g_2\}$ or $\{z^i_1, g_2\}$
Hence, none of such paths can cover vertices in the path connecting
$g_1$ and $z^i_1$.
Similarly, any shortest path whose one endpoint is $\pndt(g_1)$
and other endpoint is in $P \setminus \{\pndt(z^i_2)\}$ contains
$g_2$ and hence can not contain $z^i_1$.
This implies that $P$ can not cover any vertex in $V^{g_1}$

Consider the vertices in $V^{\alpha}$.
It is easy to verify that vertices in $V^{\alpha}$ are not covered
by the shortest path between any two common branching vertices.
Also, the shortest paths between
the pendant vertices in the element-encoding gadget do
not cover vertices in $V^{\alpha}$.
Consider vertices in $V^{\alpha}$, which are
part of the element-encoding gadget
corresponding to element $(\alpha, a)$ for some $a$ in $[n]$.
Any shortest path between the vertices in $P$
that can cover these vertices
has $\pndt(w^{\alpha}_a)$ as one of its endpoint and should contain $q^{\alpha}$.
By the construction, no path connecting $\pndt(w^{\alpha}_a)$
to any of the pendant vertex adjacent with common branching
vertices contains $q^{\alpha}$.
It is easy to see for $p^{\alpha}$, $r^{\alpha}$,
$g_3$ and $g_4$.
For the remaining common vertices, it is sufficient to prove
that the shortest path from  $\pndt(w^{\alpha}_a)$ to $g_2$
do not contain $q^{\alpha}$.
Note that the path from $w^{\alpha}_a$ to $g_2$ that contains
$q^{\alpha}$ and $x_{q^{\alpha}}$ for some $x$ in $X^i$,
is of length close to $4M^2$.
However, the shortest path from $\pndt(w^{\alpha}_a)$ to
$g_2$ is of length $3M^2$ and contains $g_3$ and $g_4$.
This implies that $P$ does not cover any vertex in $V^{\alpha}$.
Using similar arguments, the statement holds
for vertices in $V^{\beta}$ and $V^{\gamma}$
which concludes the proof of the lemma.
\end{proof}

In the following two lemmas, we prove that the reduction is safe.

\begin{lemma}
\label{lemma:forward-np-hardness-geodetic set}
If  $(\calU, \calS)$ is a \yes-instance of
\textsc{3-Dimensional Matching} then
$(G, k)$ is a \yes-instance of \textsc{Geodetic Set}.
\end{lemma}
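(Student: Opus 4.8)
The plan is to keep the forced pendant vertices and add exactly one vertex per set-encoding gadget, dictated by the matching. Let $P$ be the set of all pendant vertices. By Lemma~\ref{lemma:pendant-cover-almost-everything}, $I(P)=V(G)\setminus(V^{g_1}\cup V^{\alpha}\cup V^{\beta}\cup V^{\gamma})$, so since $P\subseteq S$ forces $I(P)\subseteq I(S)$, it suffices to pick $n$ additional vertices so that the leftover set $V^{g_1}\cup V^{\alpha}\cup V^{\beta}\cup V^{\gamma}$ is contained in $I(S)$. Let $\calS'=\{S_1,\dots,S_n\}$ be a subfamily partitioning $\calU$. Fixing any bijection between $\calS'$ and the $n$ gadgets, let $x^i\in X^i$ be the vertex of the $i$-th gadget corresponding to the set $S_i$ assigned to it, and set $S=P\cup\{x^1,\dots,x^n\}$. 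The $x^i$ are distinct non-pendant vertices, so $|S|=|P|+n=k$.

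To cover $V^{g_1}$, I would show that for each $i$ the unique shortest path from $x^i$ to $\pndt(g_1)$ is $x^i\to z^i_1\to g_1\to\pndt(g_1)$, of length $2M^2+1$. Indeed, $\dist(x^i,g_1)=2M^2$ is realised by $x^i\to z^i_1\to g_1$, whereas every route from $x^i$ to $g_1$ avoiding $z^i_1$ (through the matching partner $y\in Y^i$ and $g_2$, or through $g_3$ and $g_4$) has length at least $3M^2-1$. This path traverses all internal vertices of the $g_1$--$z^i_1$ path; ranging over $i\in[n]$ covers all of $V^{g_1}$.

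To cover $V^{\alpha}\cup V^{\beta}\cup V^{\gamma}$, the key is the symmetric assignment of path lengths. Fix an element $(\alpha,a)$; as $\calS'$ partitions $\calU$, there is a unique $i$ with $(\alpha,a)\in S_i$, and for this $i$ one has $\dist(x^i,q^{\alpha})=M^2-aM$. Combining $\dist(q^{\alpha},v^{\alpha}_a)=M^2+aM$ and $\dist(v^{\alpha}_a,w^{\alpha}_a)=M^2$, the walk $x^i\to q^{\alpha}\to v^{\alpha}_a\to w^{\alpha}_a$ has length $3M^2$. I would then confirm $\dist(x^i,w^{\alpha}_a)=3M^2$ by noting that any path into $w^{\alpha}_a$ enters through one of its three neighbours $u^{\alpha}_a,v^{\alpha}_a,g_4$, with $\dist(x^i,g_4)=2M^2$, $\dist(x^i,v^{\alpha}_a)=2M^2$ (via $q^{\alpha}$), and $\dist(x^i,u^{\alpha}_a)=2M^2$ (via any of $p^{\alpha},q^{\alpha},r^{\alpha}$, the offsets $\pm2aM$ and $\mp aM$ cancelling precisely because $(\alpha,a)\in S_i$). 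Hence $x^i\to q^{\alpha}\to v^{\alpha}_a\to w^{\alpha}_a$ is a shortest path, and together with $\pndt(w^{\alpha}_a)\in P$ it witnesses that the internal vertices of the $q^{\alpha}$--$v^{\alpha}_a$ and $v^{\alpha}_a$--$w^{\alpha}_a$ paths, i.e.\ all of $V^{\alpha}_a$, lie in $I(S)$. Applying this to every element (and symmetrically to the $\beta$- and $\gamma$-coordinates) covers $V^{\alpha}\cup V^{\beta}\cup V^{\gamma}$, which yields $I(S)=V(G)$.

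The main obstacle is not any individual covering claim but the recurring need to certify that these constructed lengths are genuine shortest-path distances. This rests on the global observation that every segment of Figure~\ref{fig:np-hardness} has length in $[0.99M^2,1.01M^2]$, so a single segment realises the distance between its endpoints while any detour using $t\ge2$ segments costs at least $1.98M^2$; the delicate part is that several competing routes to $w^{\alpha}_a$ all have length exactly $3M^2$, so one must exclude a strictly shorter route rather than merely exhibit one of length $3M^2$. Finally, it is precisely the partition hypothesis that makes the argument go through: the identity $\dist(x^i,q^{\alpha})=M^2-aM$ holds exactly when $(\alpha,a)\in S_i$, and the facts that $\calS'$ covers every element and uses exactly $n$ sets are what guarantee both full coverage of $V^{\alpha}\cup V^{\beta}\cup V^{\gamma}$ and the budget $k=|P|+n$.
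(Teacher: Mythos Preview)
Your proposal is correct and follows essentially the same approach as the paper: pick, for each $i\in[n]$, the vertex $x^i\in X^i$ corresponding to the matching set $S_i$, use Lemma~\ref{lemma:pendant-cover-almost-everything} to reduce to covering $V^{g_1}\cup V^{\alpha}\cup V^{\beta}\cup V^{\gamma}$, cover $V^{g_1}$ via $I(x^i,\pndt(g_1))$, and cover each $V^{\alpha}_a$ via the shortest path $x^i\to q^{\alpha}\to v^{\alpha}_a\to w^{\alpha}_a$ (together with $\pndt(w^{\alpha}_a)$). Your write-up is in fact more careful than the paper's in verifying that the exhibited walks are genuine shortest paths, explicitly computing the competing route lengths and invoking the $[0.99M^2,1.01M^2]$ bound on all segments.
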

\begin{proof}
Without loss of generality,
suppose $\{S_1, S_2, \dots, S_n\}$ is a collection
of $n$ sets in collection of $\calS$ (which contains $m$ sets)
that partitions all the vertices
in $\calU$.
For every $i$ in $[n]$,
consider a vertex in $X^i$ in the set-encoding
gadget $\calS^i$ corresponding to set $S_i$.
Define set $Q = \{x^1_1, x^2_2, \dots, x^n_n\}$,
where $x^i_i$ is in $X^i$.
Recall that $P$ is the collection of all the pendant vertices
in $G$.
It is easy to see that $|P \cup Q| = k$.
Due to Lemma~\ref{lemma:pendant-cover-almost-everything},
to prove that $P \cup Q$ is a geodetic set of $G$,
it is sufficient to prove that $Q$ covers vertices in
$ V^{g_1} \cup V^{\alpha} \cup V^{\beta} \cup V^{\gamma}$.
See the paragraph above Lemma~\ref{lemma:pendant-cover-almost-everything}
for the definition of these sets.
Also, recall that for any vertex $u$, $\pndt(u)$ denotes the unique pendant vertex adjacent to it, if one exists.

For a fixed $i$ in $[n]$,
$I(x^i_i, \pndt(g_1))$ covers all the vertices in
the simple path connecting $z^i_1$ to $g_1$.
As this is true for any $i$ in $[n]$,
$Q$ covers vertices in $V^{g_1}$.
Suppose $S^i = \{(\alpha, a), (\beta, b), (\gamma, c)\}$
for some $a, b, c$ in $[n]$.
Consider vertex $x^i_i$ in $X^i$.
We argue that the shortest paths between $x^i_i$ and $w^{\alpha}_a$
covers the vertices in $V^{\alpha}$ that are part of the element-encoding
gadget that is encoding element $(\alpha, a)$.
Consider Figure~\ref{fig:np-hardness-distance} with $a' = a$.
In this case, all three paths from $x^i_i$ to $u^{\alpha}_a$ via $p^{\alpha}$,
$q^{\alpha}$, and $q^{\alpha}$ as the same length of $2 M^2$.
This implies the shortest distance between $x^i_i$ and $w^{\alpha}_a$
is $3M^2$.
Moreover, the path connecting $x^i_i$ to $w^{\alpha}_a$
that contains $q^{\alpha}$ and $v^{\alpha}_a$ is shortest path between these
two vertices.
Hence, the shortest path between $x^i_i$ and $\pndt(w^{\alpha}_a)$
covers all the vertices in $V^{\alpha}$ that are part of element-encoding
gadget.
As $\{S_1, S_2, \dots, S_n\}$ partitions $\calU$,
(exactly) one of these $n$ set contains
element $(\alpha, a)$ for every $a$ in $[n]$,
all the vertices in $V^{\alpha}$
are covered by a vertex in $Q$ and a vertex in $P$.
Using the similar arguments, all the vertices in $V^{\beta}$ and
$V^{\gamma}$ are covered by vertices in $P \cup Q$.
This concludes the proof of the lemma.
\end{proof}

\begin{lemma}
\label{lemma:forward-np-hardness-geodetic set}
If $(G, k)$ is a \yes-instance of \textsc{Geodetic Set}
then $(\calU, \calS)$ is a \yes-instance of
\textsc{3-Dimensional Matching}.
\end{lemma}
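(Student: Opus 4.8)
The plan is to prove the forward implication by extracting a partition from an arbitrary geodetic set. Let $S$ be a geodetic set of $G$ with $\lvert S\rvert = k = n + \lvert P\rvert$. As noted in the preliminaries, every pendant vertex belongs to every geodetic set, so $P \subseteq S$ and the set $Q := S \setminus P$ of ``extra'' vertices satisfies $\lvert Q\rvert = n$. By Lemma~\ref{lemma:pendant-cover-almost-everything} we have $I(P) = V(G)\setminus(V^{g_1}\cup V^{\alpha}\cup V^{\beta}\cup V^{\gamma})$, so the entire burden of covering $V^{g_1}\cup V^{\alpha}\cup V^{\beta}\cup V^{\gamma}$ falls on pairs involving $Q$.

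The second step localises $Q$. The set $V^{g_1}$ consists of the internal vertices of the direct $g_1$--$z^i_1$ paths, which are the unique shortest $g_1$--$z^i_1$ paths since $\dist(g_1,z^i_1)=M^2$ while every route through $g_2$ has length $2M^2$. A vertex of $V^{g_1}$ can only be covered by a pair whose shortest path traverses this segment, which forces one endpoint to lie ``behind $z^i_1$'', i.e.\ to be a vertex of $S$ whose shortest path to $g_1$ uses the segment. The proof of Lemma~\ref{lemma:pendant-cover-almost-everything} already shows that no pendant has this property, and every behind-$z^i_1$ vertex lies inside the gadget $\calS^i$ (namely in $X^i$ or on the internal paths of $\calS^i$ nearer to $z^i_1$ than to $z^i_2$). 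As these regions are pairwise disjoint over $i\in[n]$ and each $V^{g_1}_i$ needs at least one behind-$z^i_1$ vertex, the budget $\lvert Q\rvert=n$ forces $Q=\{s_1,\dots,s_n\}$ with exactly one $s_i$ inside each $\calS^i$.

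The heart of the argument, and the step I expect to be the main obstacle, is the distance analysis that pins down which element-regions an $s_i$ can cover. The only pair able to cover $V^{\alpha}_a$ is $(s_i,\pndt(w^{\alpha}_a))$ along a shortest path through $q^{\alpha}, v^{\alpha}_a, w^{\alpha}_a$, since every other way to reach the $w^{\alpha}_a$-side bypasses $v^{\alpha}_a$. Writing $\delta=\dist(s_i,x)$ for the $X^i$-endpoint $x$ of the internal path carrying $s_i$, and $a_i$ for the $\alpha$-coordinate of the set of $x$, I would compare the $v^{\alpha}_a$-route of length $3M^2+\delta+(a-a_i)M$ against the three $u^{\alpha}_a$-routes through $p^{\alpha},q^{\alpha},r^{\alpha}$ (of lengths $3M^2+\delta-2(a-a_i)M$, $3M^2+\delta+(a-a_i)M$, $3M^2+\delta+2(a-a_i)M$) and the detour of length $3M^2+\delta$ through $g_4$. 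Using the defining property of $M$, which guarantees that the $\pm aM$ terms only break ties within the same $M^2$-order, this shows the $v^{\alpha}_a$-route is a shortest path---so that $V^{\alpha}_a$ is covered---precisely when $a=a_i$; the symmetric computation handles $V^{\beta}$ and $V^{\gamma}$. The same bookkeeping rules out an $s_i$ on the $z^i_1$--$z^i_2$ path covering any region and prevents a single $s_i$ from covering two regions of the same coordinate type, so each $s_i$ covers at most one region per type.

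Finally I would close with a counting argument. Each $s_i$ is associated with a single set $S_i=\{(\alpha,a_i),(\beta,b_i),(\gamma,c_i)\}\in\calS$ (the set of the $X^i$-vertex its path leads to) and, by the previous step, covers exactly $V^{\alpha}_{a_i}, V^{\beta}_{b_i}, V^{\gamma}_{c_i}$ and at most one region of each type. Since all $3n$ element-regions must be covered by the $n$ vertices, each of which covers at most three, every $\alpha$-, $\beta$-, and $\gamma$-value of $[n]$ is hit exactly once by $\{S_1,\dots,S_n\}$. Hence these $n$ sets partition $\calU$, and $(\calU,\calS)$ is a \yes-instance of \textsc{3-Dimensional Matching}, completing the proof.
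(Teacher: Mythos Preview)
Your overall plan---localize the $n$ extra vertices, one per gadget $\calS^i$, then run the four-way length comparison through $p^{\alpha},q^{\alpha},r^{\alpha},g_4$ to show that such a vertex covers $V^{\alpha}_a$ iff its associated set contains $(\alpha,a)$, and close by counting---is exactly the paper's approach, and the length computation you wrote down is correct \emph{for a vertex on the $(z^i_1,x]$-segment} of $\calS^i$.

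The gap is that your ``behind $z^i_1$'' localization does not pin $s_i$ down to the $(z^i_1,x]$-segment, yet your distance analysis tacitly assumes it lies there (the line ``the detour of length $3M^2+\delta$ through $g_4$'' is only valid from that segment). Two positions slip through. First, the near-$x$ half of each $(x,y)$-path is also ``behind $z^i_1$'' in your sense (its shortest route to $g_1$ does go through $z^i_1$), but from a point there at distance $\delta>0$ from $x$ the $g_4$-detour reaches $w^{\alpha}_a$ via $y$ with length $3M^2-\delta$, not $3M^2+\delta$; hence even when $a=a_i$ the $v^{\alpha}_a$-route is strictly beaten and no element region is covered. Second, $z^i_1$ itself (and the near half of the $z^i_1$--$g_2$ path) is ``behind $z^i_1$'', and from there an additional route $z^i_1\to g_2\to y\to g_4\to w^{\alpha}_a$ of length $4M^2-1$ strictly beats every route through $q^{\alpha}$. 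This last case is precisely where the construction's lone $M^2-1$ edge (between $g_2$ and each $y\in Y^i$) does its work, and you never invoke it. The paper handles both issues by explicitly partitioning $\calS^i$ into six types of parts and ruling out the $(x,y)$-part and the $z^i_1$-part separately before running the computation you gave; once you add those two case checks, the counting argument you already sketched finishes the proof.
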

\begin{proof}
Recall that $P$ is the collection of all the pendant vertices in $G$,
and $P$ is a subset of any geodetic set of $G$.
Suppose $P \cup Q$ is the geodetic set of $G$ of size of $k$,
and hence $|Q| \le n$.
By Lemma~\ref{lemma:pendant-cover-almost-everything},
vertices in $Q$ covers all the vertices in $V^{g_1} \cup V^{\alpha} \cup V^{\beta} \cup V^{\gamma} $.
See the paragraph before Lemma~\ref{lemma:pendant-cover-almost-everything}
for the definition of these sets.

We first prove that $Q$ contains at least one vertex in
the $i^{th}$ copy of set-encoding
gadget $\calS^i$ for every $i$ in $[n]$.
Assume, for the sake of contradiction, that this is not
the case for an index $i$.
Recall that $V^{g_1}$ is the union of internal vertices
connecting paths $g_1$ and $z^i_1$ for $i$ in $[n]$.
Define $V^{g_1}_i$ as the set of vertices in $V^{g_1}$ which
are in $\calS^i$.
Consider a vertex, say $h$, in $V^{g_1}_i$.
As $P \cup Q$ is a geodetic set of $G$, there are two vertices
in it,  say $s_1$ and $s_2$, that covers the vertex $h$.
By Lemma~\ref{lemma:pendant-cover-almost-everything},
$s_1$ and $s_2$ are \emph{not} in $P$.
By the above assumption, $s_1$ and $s_2$ are \emph{not} in
$\calS^i$.
Consider the shortest path from $s_1$ to $s_2$
that contains the vertex $h$.
Suppose that the paths enter and exist $\calS^i$
via common branching vertices $c_1$ and $c_2$.
Consider the case when $\{c_1, c_2\} \cap \{q^{\alpha}, q^{\beta},
q^{\gamma}\} = \emptyset$.
In this case, the shortest path between
$\pndt(c_1)$ and $\pndt(c_2)$ also contains the vertex $h$.
However, this contradicts Lemma~\ref{lemma:pendant-cover-almost-everything},
which states vertices in $P$ can not cover
any vertex in $V^{g_1}$.
Consider the other case when the shortest path
enters $\calS^i$ via, say, $q^{\alpha}$.
In this case, the above arguments can be
extended to the shortest path from $\pndt(q^{\alpha}_a)$
for some $a$ in $[n]$.
This also leads to contradiction to
Lemma~\ref{lemma:pendant-cover-almost-everything}.
Hence, our assumption is wrong and $Q$ contains at least one vertex in $\calS^i$.

As the cardinality of $Q$ is at most $n$,
this implies that $Q$ contains exactly one vertex in
$\calS^i$ for every $i$ in $[n]$.
Let $q$ be the unique vertex in $Q$ which is in $\calS^i$.
We determine the position of $Q$ in $\calS^i$ using the fact that
the shortest paths from $q$ to some other vertices
in $(P \cup Q) \setminus \{q\}$ covers all the uncovered 
vertices in the set-encoding gadget $\calS^i$.
We now restrict the possible cases for this other vertex in $P \cup Q$.
Let $h$ be the other vertex in $(P \cup Q) \setminus \{q\}$, so
the shortest path from $q$ to $h$ covers some vertices in $V^{g_1}_i$.
By the arguments above, $h$ is not in $\calS^i$.
As argued in the previous paragraph,
the vertices covered by the shortest path from $q$ to $h$
are also covered by the shortest path from $q$ to $\pndt(c)$
where $c$ is a common branching vertex or
the vertex with pendant neighbor which is adjacent to either $q^{\alpha}$, $q^{\beta}$, or $q^{\gamma}$.
Hence, while considering the cases of shortest path from $q$ to $h$,
it is sufficient to check the cases when $h$ is one of the vertex
mentioned in the previous sentence.
Note that as no vertex in $Q$ is present in any
element-encoding gadget, vertices in $Q$, which are in set-encoding
gadgets, are also used to cover the uncovered vertices in the element-encoding
gadget.
We use this property to narrow the location of $q$ in $\calS^i$.

We partition $\calS^i$ into the following parts
and prove that $q$ can be located in particular types of parts.
By the construction,
deleting all the common branching vertices results in a
collection of trees.
Consider $\calS^i$ and root it at $z^i_1$.
\begin{itemize}[nolistsep]
\item For a vertex $x$ in $X^i$, consider the vertex $x_{p^{\alpha}}$
adjacent with it and on the path connecting $p^{\alpha}$.
Define \emph{$x_{p^{\alpha}}$-part} as the subtree rooted
at $x_{p^{\alpha}}$.
Similarly, define \emph{$x_{p^{\beta}}$-part} and \emph{$x_{p^{\gamma}}$-part}.
In Figure~\ref{fig:partition-set-encoding-gadget},
$x_{p^{\alpha}}$-part is highlighted in blue.
\item For every $y$ in $Y^i$, define $y$-part as subtree rooted
at $y$,
which is highlighted using orange color in
Figure~\ref{fig:partition-set-encoding-gadget}.
\item For every $x$ in $X^i$ and the
corresponding $y$ in $Y^i$, i.e. the unique vertex
in $Y^i$ which is at distance $M^2$ from $x$,
define \emph{$(x, y)$-part} as the
collection of internal vertices of the path from
$x$ to $y$.
See the brown edge in Figure~\ref{fig:partition-set-encoding-gadget}.
\item For every $x$ in $X^i$, define \emph{$(z^i_1, x]$-part}
as the vertex $x$ and the internal vertices of the path
from $x$ to $z^i_1$.
See the green part in Figure~\ref{fig:partition-set-encoding-gadget}.
Note that the part contains $x$ but not $z^i_1$.
\item Let $h, h_1$ and $h_2$ be the middle points of the paths
from $z^i_1$ to $z^i_2$, from $z^i_1$ to $g_1$, and
from $z^i_2$ to $g_2$, respectively.
Define \emph{$z^i_1$-part} as the tree rooted
at $z^i_1$ with $h, h_1$ and $h_2$ as its leaves.
This is highlighted by the purple color in Figure~\ref{fig:partition-set-encoding-gadget}.
\item All the remaining vertices are considered as \emph{left-over part},
highlighted using cyan color in Figure~\ref{fig:partition-set-encoding-gadget}.
Note that this is the only disconnected part in the partition.
\end{itemize}

\begin{figure}[t]
\centering
\includegraphics[scale=0.65]{./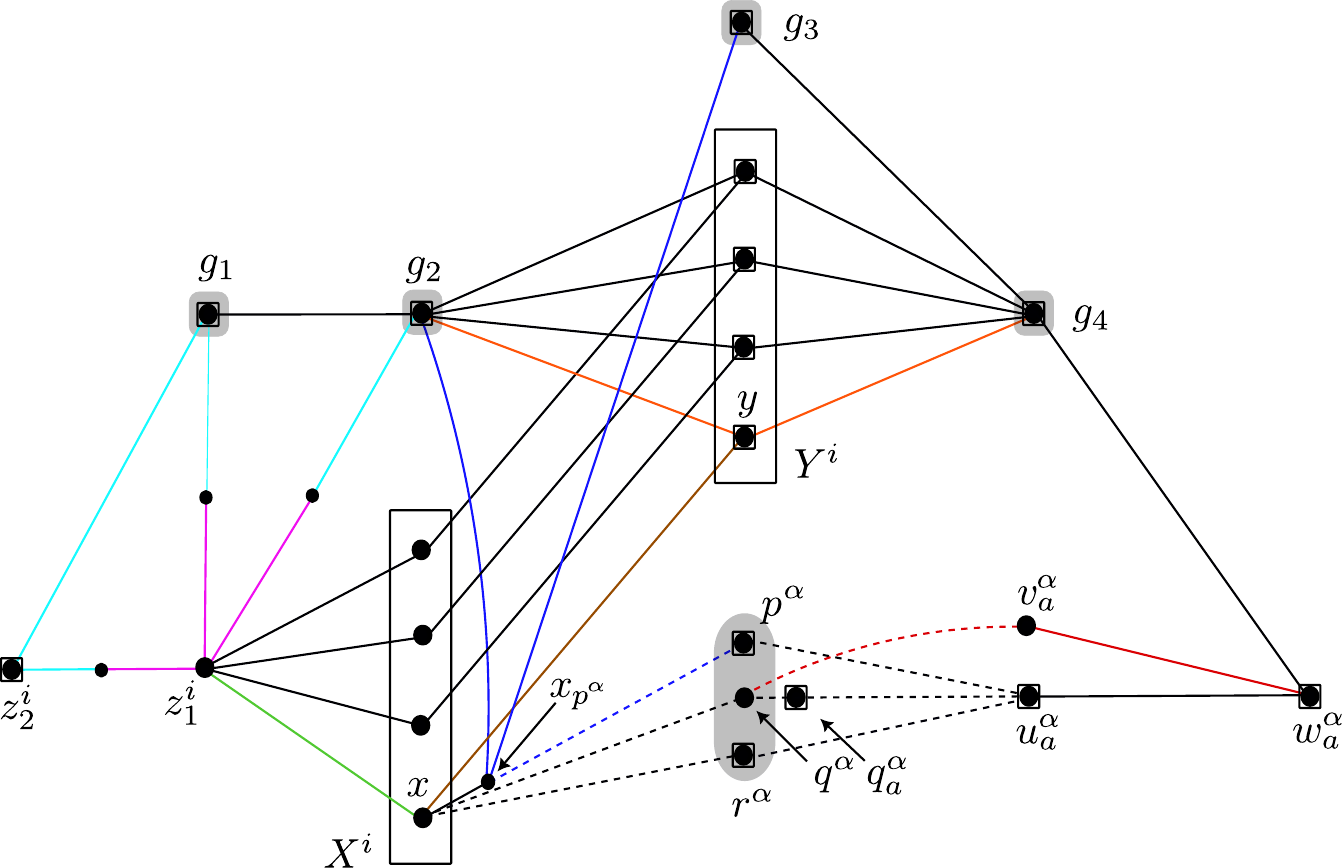}
\caption{Types of partition of vertices in $\calS^i$.
Different colors correspond to different types of parts in the set-encoding gadget.
For clarity, the figure does not show all the vertices in $\calS^i$.}
\label{fig:partition-set-encoding-gadget}
\end{figure}

Using the fact that $q$ is used to cover the vertices in $V^{g_1}_i$,
we first argue it is not in the part of the first-type or of
the second type mentioned above, nor can it be in the leftover part.
{Assume $q$ is in $x_{p^{\alpha}}$-part for some $x$ in $X^i$.}
In this case, the shortest path from $q$ to $g_1$ is via $g_2$
and hence does not contain vertices in $V^{g_1}_i$.
Similarly, the shortest path from $q$ to $z^i_2$ is via $g_1$ or via
$z^i_1$ and can not contain vertices in $V^{g_1}_i$.
It is easy to see that in this case, there are no paths from $q$ to any other vertex in $P$
can cover vertices in $V^{g_1}_i$.
Hence, $q$ cannot be in $x_{p^{\alpha}}$-part.
Now, {assume $q$ is in $y$-part for some $y$ in $Y^i$.}
The shortest path from $q$ to $g_1$ is via $g_2$
and to $z^i_2$ is via $g_1$ or $z^i_1$; hence, no such short paths can cover vertices in $V^{g_1}_i$.
Finally, it is easy to see that there is no vertex in the leftover part
can cover any vertex in $V^{g_1}_i$.

We now turn our attention to the remaining parts, viz
$(x, y)$-part, $(z^i_1, x]$-part,
for some $x$ in $X^i$, or $z^i_1$-part.
Note that the shortest path from
any vertex in $(z^i_1, x]$-part or
$z^i_1$-part to $g_1$ covers all the vertices
in $V^{g_1}_i$ whereas the same is true
for half the vertices in $(x, y)$-part which
are closer to $x$ than they are to $y$.
In the next few paragraphs, we argue that the only
vertices in the $(z^i_1, x]$-part can be used
to cover the uncovered vertices in the element
encoding gadget.
Recall that for every $a\in [n]$, define $V^{\alpha}_a$ as
the collection of internal vertices in the simple path
connecting $q^{\alpha}$ to $u^{\alpha}_a$ and the simple path connecting $u^{\alpha}_a$ to $w^{\alpha}_a$.
We also defined $V^{\alpha} = \bigcup_{a \in [n]} V^{\alpha}_a$
and defined $V^{\beta}$ and $V^{\gamma}$ in a similar way.
Note that $V^{\alpha}$, $V^{\beta}$, and $V^{\gamma}$ are the collection of uncovered vertices in the element-encoding gadget.

Suppose vertex $x$ in $X^i$
corresponds to set $S = {(a', \alpha), (b', \beta), (c', \gamma)}$ and
$h$ is a vertex in $(z^i_1, x]$-part.
We prove that
$I(h, w^{\alpha}_a)$ covers $V^{\alpha}_a$ if and only if $a' = a$.
For a vertex $h$ in $(z^i_1, x]$-part, suppose
$\dist(h, x) = d_h$.
Consider the four paths from $h$ to $w^{\alpha}_a$
that are of length at most $d_h + M^2$
and contains exactly one common branching
point amongst $\{g_4, p^{\alpha}, q^{\alpha}, r^{\alpha}\}$.
Consider the path from $h$ to $w^{\alpha}_a$ that contains
$x$, the corresponding $y$, and $g_4$.
By the construction,
the length of this path is $d_h + 3M^2$.
For the remaining three paths,
consider Figure~\ref{fig:np-hardness-distance}.
The length of these three paths
is $d_h + 3M^2 + 2(a' - a)M$, $d_h + 3M^2 + (a - a')M$,
and $d_h + 3M^2 + 2(a - a')M$ when
it contains $p^{\alpha}$, $q^{\alpha}$, and $r^{\alpha}$,
respectively.
Note that $I(h, w^{\alpha}_a)$ covers $V^{\alpha}_a$ if and only
the shortest path from $h$ to $w^{\alpha}_a$ contains $q^{\alpha}$.
From the distances above, it is easy to see that this condition
is true only when $a = a'$.
Following the identical arguments, we can prove that
$I(h, w^{\beta}_b)$ covers $V^{\beta}_b$ if and only if $b' = b$, and
$I(h, w^{\gamma}_c)$ covers $V^{\gamma}_a$ if and only if $c' = c$.

We next prove that no vertex from $(x, y)$-part or $z^i_1$-part
can cover any uncovered vertex in the element-encoding gadget.
Consider a vertex $h$ in $(x, y)$-part.
As argued in the previous paragraph, $I(h, w^{\alpha}_a)$
can cover the vertices in $V^{\alpha}$ if and only if the shortest path
from $h$ to $w^{\alpha}_a$ contains $q^{\alpha}$.
However, the shortest path from $h$ to $w^{\alpha}_a$ contains $g_4$
and hence cannot cover the vertices in $V^{\alpha}$.
Now consider the vertices in $z^i_1$-part.
For any vertex $h$ in $z^i_1$-part which is an internal
vertex of path from $z^i_1$ to $g_2$, the shortest path
  $h$ to $w^{\alpha}_a$ contains $g_4$ and hence can not cover
vertices in $V^{\alpha}$.
Note that the shortest path from $z^i_1$ to $w^{\alpha}_a$ also
contains $g_4$ in it.
This is implied by the fact that the distance between $g_2$ and every
vertex in $Y^i $ is $M^2 - 1$ (instead of $M^2$ like most of the other paths).
Hence, $I(z^i_1, w^{\alpha}_a)$ does not cover any vertex
in $V^{\alpha}$.
For any vertex in the remaining $z^i_1$-part, the shortest path
from it to $w^{\alpha}_a$ is via $z^i$, which contains $g_4$.
These arguments imply that $q$ can not be in $(x, y)$-part or $z^i_1$-
part for any $x$ in $X^i$.

We are in a position to conclude the proof of the lemma.
Recall that $P \cup Q$ is a geodetic set of $G$.
Considering the position of vertices in $Q$, we construct a
collection of $n$ subsets $\calS'$ of $\calS$ as follows:
For every $i$ in $[n]$, let $x$ be the unique vertex in $X^i$
such that the vertex in $Q$ is in $(z^i_1, x]$-part of $\calS^i$.
If $x$ corresponds to set $S$,
then include $S$ in $\calS'$.
It is easy to see that the cardinality of $\calS'$ is at most $n$.
We argue that $\calS'$ covers all the vertices in $\calU$.
Consider an arbitrary element, say $(\alpha, a)$, in $\calU$.
The vertices in $V^{\alpha}$ are part of
the element-encoding gadget that encodes $(\alpha, a)$.
By above arguments, for some $i$ in $[n]$, there is $x$ in $X^i$ 
such that $(z^i_1, x_1]$-part of $\calS^i$
contains vertex $q$ in $Q$ with the property that
$I(q, w^{\alpha}_a)$ covers vertices in $V^{\alpha}$ in the element-encoding gadget.
Suppose $x$ corresponds to the vertex $S = \{(\alpha, a'), (\beta, b'), (\gamma, c')\}$ for some $a', b', c' \in [n]$.
From the arguments above, $I(q, w^{\alpha}_a)$ covers vertices in $V^{\alpha}$
if and only if $a' = a$.
This implies that $(\alpha, a)$ is covered by some set in $\calS'$.
Since this is an arbitrary element in $\calU$, one can conclude
that $\calS'$ covers all the vertices in $\calU$.
This implies that $(\calU, \calS)$ is a \yes-instance of \textsc{3-Dimensional Matching}.
\end{proof}

\begin{lemma}
Pathwidth and feedback vertex set the number of
$G$ are at most $17$ and $13$, respectively.
\end{lemma}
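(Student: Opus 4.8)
The plan is to exhibit an explicit set $X$ of $13$ vertices whose removal leaves a forest, and then to bound the pathwidth separately. The natural candidates for $X$ are precisely the common branching vertices of the construction: the four global vertices $g_1, g_2, g_3, g_4$ together with the nine vertices $p^\alpha, q^\alpha, r^\alpha, p^\beta, q^\beta, r^\beta, p^\gamma, q^\gamma, r^\gamma$. This is exactly $13$ vertices. First I would verify that deleting $X$ destroys every cycle. The key structural observation, already invoked in the proof of Lemma~\ref{lemma:pendant-cover-almost-everything}, is that \emph{every} cycle in $G$ must pass through at least one of these common branching vertices: each set-encoding gadget $\calS^i$ and each element-encoding gadget is internally a tree (the matching-path between $x$ and $y$, the paths from $z^i_1$ to $X^i$ and $z^i_2$, the paths from $w$ to $u$ and $v$, all attach in a tree-like fashion), and the only vertices shared across gadgets or creating multiple routes between two points are the common branching vertices. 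Hence I would argue that $G - X$ decomposes into a disjoint union of trees (one piece per gadget, plus the pendant vertices and the subdivided connecting paths, each of which is acyclic), giving $\fvs(G) \le 13$.

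\textbf{Bounding the pathwidth.}
For the pathwidth bound of $17$, I would use the mixed-search-game characterization stated in the preliminaries, namely $\pw(G) \le \texttt{ms}(G) \le \pw(G)+1$ from Takahashi et al.~\cite{DBLP:journals/tcs/TakahashiUK95}. The strategy is to describe a winning mixed search that uses at most $18$ searchers, which yields $\pw(G) \le 17$. I would permanently station $13$ searchers on the set $X$ of common branching vertices; since every cycle passes through $X$, these searchers prevent recontamination across gadget boundaries, so the remaining components can be cleaned one at a time with a small constant number of additional searchers. Within a single set-encoding gadget $\calS^i$ (with its $X^i$, $Y^i$, $z^i_1$, $z^i_2$), or a single element-encoding gadget, the graph is a tree with a bounded number of leaf-branches attached to the boundary, so I would show that a handful of mobile searchers (beyond the $13$ stationary ones) suffice to sweep each such tree: slide a searcher along each simple path, handling the branching at $z^i_1$, $x$, $w^\alpha_a$, etc., one branch at a time. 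The counting has to confirm the peak number of simultaneously active searchers never exceeds $18$.

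\textbf{Main obstacle.}
The delicate point, and the step I expect to be the main obstacle, is the exact bookkeeping of how many mobile searchers are needed \emph{simultaneously} on top of the $13$ stationary ones, because the internal branching vertices $z^i_1$, $x$, and $w^\alpha_a$ have degree three or more and force temporary placements that must be accounted for carefully. The subtlety is that to clear a simple path ending at a branching vertex without recontaminating it, one needs a searcher on the branching vertex itself while sliding along each incident path in turn; ensuring these temporary searchers, combined with the $13$ permanent ones, never total more than $18$ is where the precise constant is pinned down. I would organize the search so that at most five mobile searchers are alive at any moment—enough to guard one internal branching vertex and slide along its incident paths sequentially—while relying on the fact that the $13$ stationary searchers already isolate each gadget. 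Once this is verified, combining $\pw(G) \le \texttt{ms}(G) - 1 \le 17$ with $\fvs(G) \le 13$ completes the lemma.
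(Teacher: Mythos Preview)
Your overall approach matches the paper's: exhibit the $13$ common branching vertices as a feedback vertex set, and bound the pathwidth via a mixed search strategy that stations searchers permanently on those $13$ vertices and then sweeps each gadget with a few mobile searchers. The feedback vertex set argument is fine and identical to the paper's.

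However, there is a concrete error in your pathwidth argument. You write that a strategy with $18$ searchers yields $\pw(G)\le 17$, and you justify this by the inequality $\pw(G)\le \texttt{ms}(G)-1$. That inequality is the wrong way around: Takahashi et al.\ give $\pw(G)\le \texttt{ms}(G)\le \pw(G)+1$, so from $\texttt{ms}(G)\le 18$ you only get $\pw(G)\le 18$, not $17$. The lower bound $\texttt{ms}(G)-1\le \pw(G)$ goes in the opposite direction and is of no help here. To obtain $\pw(G)\le 17$ you must exhibit a strategy with at most $17$ searchers, i.e.\ only \emph{four} mobile searchers on top of the $13$ stationary ones, not five. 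The paper does exactly this: one searcher is pinned at $z^i_1$, two more are placed on a pair $x\in X^i$ and its partner $y\in Y^i$ (or on $x$ and an adjacent branching vertex like $x_{p^{\alpha}}$), and the fourth slides along the incident simple paths; for element-encoding gadgets three are placed on $u^{\alpha}_a,v^{\alpha}_a,w^{\alpha}_a$ and the fourth slides. So your plan is right but the bookkeeping must be tightened from five to four mobile searchers, and the inequality must be applied in the correct direction.
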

\begin{proof}
We present
a mixed search strategy to clean $G$ using $17$ searchers
to bound the pathwidth of $G$.
Place $13$ searchers on common branching vertices.
These searchers will never move from their places.
(This is equivalent to deleting common branching vertices and presenting
a mixed search strategy using four searchers for the resulting graph.)
We search the graph in $3n + n$ rounds,
one round for each element-encoding gadget and
one round for each set of set-encoding gadget, respectively.

Consider an element-encoding gadget encoding
an element, say $(\alpha, a)$ for some $a$ in $[n]$.
We can place three searchers on $u^{\alpha}_a$, $v^{\alpha}_a$,
and $w^{\alpha}_a$, and the fourth searcher can move along the simple
the path connecting these vertices and with common branching vertices
to clean all the edges in the connecting paths.

By the construction, deleting all the common branching vertices results
in a collection of trees.
Consider the set-encoding gadget $\calS^i$ for some $i$ in $[n]$ and
root it at $z^i_1$.
In each round, we fix a searcher on $z^i_1$.
We need one searcher to clean all the edges between the simple
path connecting $z^i_1$ to $g_1$ via $z^i_2$,
path of length $M^2$ connecting $z^i_1$ to $g_1$,
and finally path connecting $z^i_1$ to $g_2$.
Once these paths are cleaned, the searcher is free
and can be used to clean the remaining part of the
set-encoding gadget.
We divide the remaining part of this round
into $m$ sub-rounds corresponding
to each vertex in $X^i$.
For a sub-round, fix two searchers on a vertex $x$ in $X^i$
and another searcher on the corresponding vertex $y$ in $Y^i$
which is at distance $M^2$ from $x$.
The remaining searcher can clean all the
edges in the simple paths connecting $x$ to $y$, $y$ to common
vertices, and $x$ to $z^i_1$.
Next, the searcher placed on $y$ can be relocated to
each of the branching vertices adjacent to $x$, i.e. vertices of the form
$x_{p^{\alpha}}$.
Now, the other searcher can be used to clean the subtree rooted at
the branching vertex.
This completes one sub-round.
Completing such $m$ sub-rounds clears
all the paths in $\calS^i$.

As we need $17$ searchers to clean the graph,
result of Takahashi et al.~\cite{DBLP:journals/tcs/TakahashiUK95}
implies that $\pw(G) \le 17$.

As deleting all the common branching vertices results in a
collection of trees, feedback vertex set number of the graph is $13$.
\end{proof}

\section{Conclusion}\label{sec:conclu}

In this article, we proved that \textsc{Geodetic Set}
is \NP-complete even on graphs with constant pathwidth and feedback
vertex set number answering the open question by Kellerhals and Koana~\cite{KK22}.
In the same paper, the authors provided an \FPT\ algorithm
(with impractical running time) when parameterized by
treedepth (denoted by $\td$) of the input graph.
Recent work by Foucaud et al.~\cite{DBLP:conf/icalp/FoucaudGK0IST24}
showed that \textsc{Geodetic Set} admits an algorithm running in time
$2^{\diam^{\calO(\tw)}} \cdot poly(|V(G)|)$,
where $\diam$ denotes the diameter of the graph.
This result, along with the fact that $\diam$ and $\tw$ are upper bounded
by $2^{\td}$ and $\td + 1$, respectively,
imply that \textsc{Geodetic Set}
admits an algorithm running in time $2^{2^{\calO(\td^2)}}\cdot poly(|V(G)|)$.
It would be interesting to obtain an \FPT\ algorithm
parameterized by treedepth with improved running time.
We remark that
Foucaud et al.~\cite{DBLP:conf/icalp/FoucaudGK0IST24} proved
that the problem does not admit an algorithm running in time
$2^{2^{o(\td)}}\cdot poly(|V(G)|)$, unless the \ETH\ fails.

As highlighted in the introduction of this article and
also in ~\cite{floISAAC20,KK22},
\mdfull and \gsfull share many hardness properties.
We mention one notable exception:
\textsc{Metric Dimension} is \FPT\ when
parameterized by the solution size plus the treelength~\cite{BelmonteFGR17},
however, \gsfull\ is \W[2]-hard when parameterized by 
the solution size even when treelength is a constant.
See \cite[Theorem~2]{DBLP:journals/dm/DouradoPRS10}.
Regarding the parameterized complexity
of \mdfull\ when parameterized by pathwidth
and feedback vertex set number of the graph,
Galby et al.~\cite{GKIST23} showed that the problem is
\W[1]-hard.
Can we improve this result to obtain a result for \mdfull\
as we did for \gsfull\ in this article?



\bibliography{bib}


\end{document}